\newtheorem{theorem}{Theorem}
\newtheorem{proposition}{Proposition}
\def\J{{\bf J}}
\def\R{{\bf R}}
\def\Q{{\bf Q}}
\def\I{{\bf I}}
\def\A{{\bf A}}
\def\E{{\bf E}}
\def\U{{\bf U}}
\def\V{{\bf V}}
\def\F{{\bf F}}
\def\G{{\bf G}}
\def\H{{\bf H}}
\def\U{{\bf U}}
\def\B{{\bf B}}
\def\C{{\bf C}}
\def\X{{\bf X}}
\def\x{{\bf x}}
\def\Thetab{\bm{\Theta}}
\def\Sigmab{\bm{\Sigma}}
\def\Lambdab{\bm{\Lambda}}
\def\mub{\bm{\mu}}
\def\det{\operatorname{det}}
\def\diag{\operatorname{diag}}
\newcommand{\CU}{\mathcal{U}}
\title{Riemannian optimization on the manifold of unitary and symmetric matrices with application to BD-RIS-assisted systems}
\name{Ignacio~Santamaria$^1$, Mohammad Soleymani$^2$, Eduard Jorswieck$^3$, Jes{\'u}s Guti{\'e}rrez$^4$, Carlos Beltr\'an$^5$ \thanks{This work is partly supported by the European Commission’s Horizon Europe, Smart Networks and Services Joint Undertaking, research and innovation program under grant agreement 101139282, 6G-SENSES project. The work of I. Santamaria was also partly supported under grant PID2022-137099NB-C43 (MADDIE) funded by MICIU/AEI /10.13039/501100011033 and FEDER, UE. }}
\address{\normalsize$^1$Department of Communications Engineering, Universidad de Cantabria, 39005 Santander, Spain\\ \normalsize
$^2$Signal and System Theory Group, Universit{\"a}t  Paderborn, 33098 Paderborn, Germany\\ \normalsize
$^3$Institute for Communications Technology, Technische Universit{\"a}t Braunschweig, 38106 Braunschweig,
Germany\\ \normalsize
$^4$IHP - Leibniz-Institut
f{\"u}r Innovative Mikroelektronik, 15236 Frankfurt (Oder), Germany\\ \normalsize
$^5$Departamento de Matem\'aticas, Estad\'istica y Computaci\'on, Universidad de Cantabria, 39005 Santander, Spain
}
\begin{document}
\ninept
\maketitle
\begin{abstract}
In this paper, we rigorously characterize for the first time the manifold of unitary and symmetric matrices, deriving its tangent space and its geodesics. The resulting parameterization of the geodesics (through a real and symmetric matrix) allows us to derive a new Riemannian manifold optimization (MO) algorithm whose most remarkable feature is that it does not need to set any adaptation parameter. We apply the proposed MO algorithm to maximize the achievable rate in a multiple-input multiple-output (MIMO) system assisted by a beyond-diagonal reconfigurable intelligent surface (BD-RIS), illustrating the method's performance through simulations. The MO algorithm achieves a significant reduction in computational cost compared to previous alternatives based on Takagi decomposition, while retaining global convergence to a stationary point of the cost function.    
\end{abstract}
\begin{keywords}
Riemannian optimization, tangent space, Reconfigurable intelligent surface (RIS), multiple-input multiple-output (MIMO)
\end{keywords}

\section{Introduction}
\label{sec:intro}
Manifold optimization methods have become an essential tool in modern signal processing, as many design problems involve constraints that naturally define smooth Riemannian manifolds \cite{AbsilBook,boumal2023intromanifolds,edelman1998geometry}. A particularly relevant manifold arises when dealing with unitary matrices ($\U\U^H=\I_n$). The unitary manifold, denoted as $\CU$, has been extensively studied, and its geometry is well understood, allowing the development of effective Riemannian optimization algorithms with applications in communications \cite{SunTSP24}, blind source separation \cite{AbrudanTSP08}, and machine learning \cite{ProjUNN}. 

In certain applications, however, the optimization matrix must be both unitary and symmetric ($\U = \U^T$), belonging to the manifold denoted in this paper as $\CU_s$. Despite its relevance, $\CU_s$ has received little attention in the literature, and its main elements such as the tangent space and geodesics remain unexplored. The main goal of this paper is to rigorously study the geometry of $\CU_s$ and leverage these results to design manifold optimization (MO) algorithms that operate directly on this manifold.

As an application, we consider the rate maximization of a multiple-input multiple-output (MIMO) link assisted by a beyond-diagonal reconfigurable intelligent surface (BD-RIS). In this context, the BD-RIS scattering matrix, $\Thetab$, is a unitary (passive lossless) and symmetric (due to reciprocity) matrix \cite{ClerckxTWC22a, ClerckxTWC22b}, making it a natural candidate for optimization over $\CU_s$. Previous approaches have circumvented the lack of characterization of $\CU_s$ by discarding the symmetry constraint and optimizing over $\CU$, as in \cite{EmilEuCap2025}; or by relaxing the problem and projecting the relaxed solution onto the set $\CU_s$, as in \cite{MaoCL2024}. These alternatives are clearly suboptimal and provide only an approximate solution. Other recent works \cite{SantamariaSPAWC24, Xia25} exploit the so-called Takagi factorization \cite{Takagi, Autonne}, which states that any unitary and symmetric matrix $\Thetab \in \CU_s$ can be decomposed as $\Thetab = \Q\Q^T$, where $\Q \in \CU$, but the computational cost of these alternatives is high. 

The main contributions of this work are:
\begin{itemize}
\item We present a study of the manifold $\CU_s$, characterizing its tangent space, its retraction mapping, and its geodesics.

\item Taking advantage of the parameterization of geodesics in $\CU_s$, we present an MO algorithm that does not require adaptation parameters.

\item As an application of the MO algorithm, we consider the rate maximization problem in a BD-RIS-assisted MIMO system, comparing its performance with previous algorithms proposed in the literature.

\end{itemize}

\indent \textit{Notation}: The symbols for scalars, vectors, matrices, and sets are, respectively, $x$, $\x$, $\X$, and $\mathcal{X}$. $\A^T$, $\A^*$, $\A^H$, $\A^{-1}$, $\A^{1/2}$, $\det(\A)$ are, respectively, transpose, conjugate, Hermitian, inverse, square root, and determinant of matrix $\A$. $\mathrm{Imag}(a)$ denotes the imaginary part of $a$, and $j$ denotes the imaginary unit. $\I_n$ denotes the identity matrix of size $n$. ${\cal CN}({\bf 0}, {\bf R})$ is the proper complex Gaussian distribution with zero mean and covariance matrix ${\bf R}$.

\section{Manifold optimization on $\CU_s$}
\subsection{The manifold $\CU$}
It is well known that $\CU=\{\U \in \mathbb{C}^{n\times n}: \U\U^H=\I_n\}$ is a real Riemannian manifold of dimension $n^2$ \cite{AbsilBook,boumal2023intromanifolds,edelman1998geometry}, with tangent space $T_{\U} \CU=\{\B\in \mathbb{C}^{n\times n}: \U^H\B+\B^H\U=0\}$. Alternatively, the tangent space at $\U$ can be parametrized as $\B = \U {\bf S}$, where ${\bf S}$ is an $n \times n$ skew-Hermitian matrix.
The Riemannian exponential that maps a point in the tangent space to the manifold is defined in terms of the matrix exponential as
\[
\begin{matrix}
	\exp_{\CU,\U}:&T_{\U}\CU&\to&\CU\\
	&\B&\to&\U e^{\U^H\B}
\end{matrix}
\]
which is a diffeomorphism if restricted to tangent matrices $\B$ of sufficiently small Frobenius norm \cite{HallLieGroups}.

\subsection{The manifold $\CU_s$}
Our objective is to study the set $ \CU_s= \{\U \in \CU: \U=\U^T\}$ and then derive a Riemannian optimization algorithm. Firstly, we recall Takagi's decomposition of a complex and symmetric matrix, which plays an important role in the characterization of $\CU_s$. 
\begin{theorem}[Takagi factorization \cite{Takagi}]
Let $\A = \A^T$ be an $n \times n$ complex symmetric matrix. Then, there exist an $n \times n$ unitary matrix $\Q \in \CU$ and an $n\times n$ diagonal matrix $\bm{ \Sigma} = \diag(\sigma_1,\ldots,\sigma_n)$ with $\sigma_1 \geq \sigma_2 \geq \ldots \geq \sigma_n \geq 0$ such that $\A = \Q \bm{ \Sigma} \Q^T$.
\end{theorem}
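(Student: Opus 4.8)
The plan is to prove Takagi's factorization by reducing the symmetric complex case to the well-understood spectral theorem for Hermitian matrices. The key observation is that for a symmetric matrix $\A = \A^T$, the product $\A \A^H = \A \A^*$ is Hermitian positive semidefinite, so its eigenvalues are real and nonnegative; these will turn out to be exactly the squares $\sigma_i^2$ of the Takagi singular values. First I would diagonalize $\A \A^* = \V \bm{\Lambda} \V^H$ with $\V \in \CU$ and $\bm{\Lambda} = \diag(\lambda_1, \ldots, \lambda_n)$, $\lambda_i \geq 0$. The goal is then to massage the eigenvectors of $\A \A^*$ into a unitary $\Q$ realizing $\A = \Q \bm{\Sigma} \Q^T$ with $\sigma_i = \sqrt{\lambda_i}$.

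The main technical step is to handle the distinct nonzero eigenvalues and the kernel separately, then patch the pieces together. On the eigenspace associated with a nonzero eigenvalue $\lambda$, I would examine how $\A$ acts: if $\A \A^* \v = \lambda \v$, then one checks that $\A^* \v$ (equivalently $\A \v^*$, using symmetry) is again an eigenvector, so $\A$ maps each eigenspace of $\A\A^*$ into a conjugated version of itself. Within each fixed eigenspace one can then choose an orthonormal basis adapted to the antilinear map $\v \mapsto \sigma^{-1} \A \v^*$, which is an involution (its square is the identity on that eigenspace); the eigenvectors of this involution fixed under conjugation give the columns of $\Q$ for that block. The kernel of $\A \A^*$, on which $\A$ itself vanishes, is handled trivially by padding with any orthonormal basis and zeros on the diagonal of $\bm{\Sigma}$. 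Finally I would order the $\sigma_i$ in nonincreasing order by permuting columns of $\Q$ accordingly.

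The hard part will be the careful diagonalization of the antilinear involution on each eigenspace, because one cannot directly invoke the complex spectral theorem for a conjugate-linear map. The standard device is to pass to the real structure: writing the involution in terms of real and imaginary parts yields a genuine real symmetric (or orthogonal) operator whose real eigenvectors, re-interpreted as complex vectors, furnish the required orthonormal basis on which $\A$ acts by scaling with a real nonnegative factor. Verifying that these patched-together columns form a single \emph{global} unitary $\Q$ and that the relation $\A = \Q \bm{\Sigma} \Q^T$ holds across all blocks simultaneously (including the degeneracies where several $\sigma_i$ coincide) is where bookkeeping must be done carefully, but no genuinely new idea is needed beyond the per-block construction.

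A cleaner alternative I would keep in mind is an inductive argument: extract one Takagi pair at a time. Pick a unit vector $\v$ maximizing $|\v^H \A \v^*|$ (or equivalently a top eigenvector of $\A\A^*$), rotate its phase so that $\sigma_1 = \v^H \A \v^* \geq 0$ is real, and show that $\A$ leaves invariant the orthogonal complement of $\v$ in the sense that $\A$ restricted there is again symmetric of one lower dimension. Induction on $n$ then completes the factorization and simultaneously produces the eigenvalues in nonincreasing order if one always extracts the largest $\sigma$ first. This inductive route sidesteps the global patching issue and is likely the more transparent proof to present.
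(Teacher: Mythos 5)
Your proposal is correct, but it takes a genuinely different route from the paper. The paper does not actually prove Takagi's theorem: it cites the original references and instead records a constructive reduction to the SVD --- if $\A = \F \Sigmab \G^H$ is an SVD of the symmetric matrix $\A$, then $\Q = \F(\F^H\G^*)^{1/2}$ gives $\A = \Q\Sigmab\Q^T$ --- which is what its algorithm needs, since Takagi factors must be computed numerically and existence is inherited from the SVD. Your argument is instead the classical self-contained Autonne--Takagi proof: diagonalize the Hermitian positive semidefinite matrix $\A\A^*$, show that the antilinear map $J:\v\mapsto\sigma^{-1}\A\v^*$ is an involution of each eigenspace $E_{\sigma^2}$ (and that $\A$ annihilates the conjugated kernel), extract an orthonormal basis of con-eigenvectors $\A\bar{\v}_i=\sigma_i\v_i$ from the real form fixed by $J$, and stack these bases into $\Q$; the collected relations read $\A\Q^*=\Q\Sigmab$, which is equivalent to $\A=\Q\Sigmab\Q^T$, and the ``patching'' you worry about is automatic because eigenspaces of the Hermitian matrix $\A\A^*$ for distinct eigenvalues are mutually orthogonal. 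What your route buys is a complete existence proof from first principles that handles repeated singular values cleanly (you treat whole eigenspaces at once); what the paper's route buys is a numerically implementable recipe tied to standard SVD software, at the cost of resting on the nontrivial fact that the unitary square root $(\F^H\G^*)^{1/2}$ can be split across the two sides of the factorization.

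Two points in your outline need tightening. First, $\A^*\v$ and $\A\v^*$ are not ``equivalent'': they are conjugates of each other, $\A^*\v=\overline{\A\v^*}$; the map you need is $\v\mapsto\A\v^*$, and for its fixed vectors to yield an \emph{orthonormal} basis you must check that $J$ is anti-unitary on $E_{\sigma^2}$, i.e. $\langle J\v,J\w\rangle=\langle\w,\v\rangle$, which follows from $\A^*\A\v^*=\sigma^2\v^*$. Second, in your inductive variant, phase-rotating a top eigenvector of $\A\A^*$ produces the required con-eigenvector relation $\A\bar{\v}=\sigma_1\v$ only when $\sigma_1^2$ is a simple eigenvalue (then $J$ preserves the one-dimensional eigenspace, so $\A\bar{\v}=c\v$ and the phase of $\v$ absorbs the phase of $c$). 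In the degenerate case you need either the first-order optimality conditions of a maximizer of $|\v^H\A\v^*|$, or the standard trick: if $\w=\A\bar{\v}+\sigma_1\v\neq{\bf 0}$ then $\A\bar{\w}=\sigma_1\w$, and otherwise $j\v$ is itself a con-eigenvector. With these repairs both of your routes are complete.
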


From the singular value decomposition (SVD) of a symmetric matrix $\A = \F \Sigmab \G^H$, its Takagi decomposition can be obtained as $\A = \Q \Sigmab \Q^T$ with $\Q = \F (\F^H\G^*)^{1/2}$. Note that if all singular values of $\A$ are distinct,  $\F^H\G^*$ is a diagonal matrix and the factor $\Q$ is obtained as described in \cite[Remark 2]{SantamariaSPLetters2023}. We have the following propositions:
\begin{proposition} [{\bf tangent space and geodesics}]
\label{prop:tangent}
	$\CU_s$ is a real Riemannian manifold of dimension $n(n+1)/2$ and its tangent space at the point $\U \in\CU_s$ is
	\begin{align*}
	T_{\U}\CU_s=&\{\B\in  \mathbb{C}^{n\times n}: \U^H\B+\B^H\U={\bf 0},\;\B=\B^T\}\\
    =& \{j\Q\R\Q^T:\R\in \mathbb{R}^{n\times n}, \R=\R^T \},
	\end{align*}
    where $\U=\Q\Q^T$ is a Takagi decomposition of $\U$. Therefore, the tangent space at $\U =  \Q\Q^T$ is composed of all matrices $\B = j \Q\R\Q^T$ with $\R$ an  $n\times n$ real and symmetric matrix. In addition, the Riemannian exponential is the restriction of $\exp_{\CU,\U}$ to $T_{\U}\CU_s$. Therefore, the geodesic in $\CU_s$ starting at $\U =\Q\Q^T$ and with direction $\B = j\Q\R\Q^T \in  T_{\U}\CU_s$ is
    \begin{equation*}
        \U(\mu) = \U e^{\U^H\B \mu} = \U e^{j \Q^* \R \Q^T \mu},
    \end{equation*}
    with $\R$ real and symmetric, and $\mu \geq 0$ a step size that parameterizes the geodesic.
\end{proposition}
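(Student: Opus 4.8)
The plan is to prove the three assertions—manifold structure with dimension, the two equivalent descriptions of the tangent space, and the geodesic formula—in that order, relying on the Takagi decomposition $\U=\Q\Q^T$ throughout. Since $\U$ is unitary, all its singular values equal $1$, so the diagonal factor $\Sigmab$ in its Takagi factorization is $\I_n$ and the decomposition indeed reduces to $\U=\Q\Q^T$ with $\Q\in\CU$. I will also use repeatedly that $\Q^H\Q=\Q\Q^H=\I_n$, hence $\Q^T\Q^*=\Q^*\Q^T=\I_n$ and $(\Q^*)^{-1}=\Q^T$.

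For the manifold structure, the cleanest route is to identify $\CU_s$ with a homogeneous space. The group $\CU$ acts smoothly and transitively on $\CU_s$ by $\Q\cdot\U=\Q\U\Q^T$: transitivity is exactly the Takagi factorization (every $\U\in\CU_s$ equals $\Q\cdot\I_n$), and the stabilizer of $\I_n$ is $\{\Q\in\CU:\Q\Q^T=\I_n\}$, which, combined with $\Q\Q^H=\I_n$, forces $\Q^T=\Q^H$ and hence $\Q$ real, i.e. the stabilizer is the orthogonal group $O(n)$. Thus $\CU_s\cong\CU/O(n)$ is a smooth compact manifold of real dimension $\dim\CU-\dim O(n)=n^2-n(n-1)/2=n(n+1)/2$, which pins down the dimension and guarantees that $T_\U\CU_s$ has that same dimension.

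For the tangent space, I would first obtain the implicit description by differentiating the two defining constraints along a curve $\U(t)\subset\CU_s$ with $\U(0)=\U$ and $\dot\U(0)=\B$: differentiating $\U\U^H=\I_n$ yields $\U^H\B+\B^H\U=\mathbf{0}$, and differentiating $\U=\U^T$ yields $\B=\B^T$, which is the first form. To reach the explicit form I would change variables via $\R':=\Q^H\B\Q^*$, so that $\B=\Q\R'\Q^T$ automatically. A short computation then shows that $\B=\B^T$ is equivalent to $\R'=\R'^T$, while $\U^H\B+\B^H\U=\Q^*(\R'+\R'^H)\Q^T=\mathbf{0}$ is equivalent to $\R'^H=-\R'$; together these give $\overline{\R'}=-\R'$, i.e. $\R'=j\R$ with $\R$ real and symmetric. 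This yields both inclusions at once, and since $\R\mapsto j\Q\R\Q^T$ is real-linear and injective, its image has dimension $n(n+1)/2$, consistent with the homogeneous-space count.

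Finally, for the geodesics I would show that $\CU_s$ is \emph{totally geodesic} in $\CU$, so that $\exp_{\CU,\U}$ restricts to it. Starting from $\B=j\Q\R\Q^T$ I would compute $\U^H\B=j\Q^*\R\Q^T=j\Q^*\R(\Q^*)^{-1}$, a similarity transform, whence $e^{\U^H\B\mu}=\Q^*e^{j\mu\R}\Q^T$ and, using $\Q^T\Q^*=\I_n$, $\U(\mu)=\U e^{\U^H\B\mu}=\Q e^{j\mu\R}\Q^T$. This expression is manifestly symmetric (because $e^{j\mu\R}$ is symmetric for symmetric $\R$) and unitary (because $j\mu\R$ is skew-Hermitian), so $\U(\mu)\in\CU_s$ for all $\mu$, establishing the stated formula. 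The main obstacle is the tangent-space equivalence: one is tempted to prove a single inclusion and close the gap by the dimension count, but I prefer the explicit change of variables $\R'=\Q^H\B\Q^*$, which delivers both inclusions simultaneously and makes the geodesic computation fall out immediately; the one subtlety to treat with care is the interplay between the Hermitian condition ($\R'^H=-\R'$) and the transpose condition ($\R'^T=\R'$) that together pin $\R'$ to be purely imaginary.
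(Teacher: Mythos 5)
Your proposal is correct, but it follows a genuinely different route from the paper's appendix. Where the paper merely asserts that the descriptions of $T_{\U}\CU_s$ are ``equivalent vector spaces of dimension $n(n+1)/2$'' (calling it a simple exercise), you actually prove the equivalence, via the change of variables $\R'=\Q^{H}\B\Q^{*}$ that reduces the two constraints to $\R'^{T}=\R'$ and $\R'^{H}=-\R'$, hence $\R'=j\R$ with $\R$ real and symmetric. Where the paper obtains the smooth structure and the exponential's role through a local-diffeomorphism argument (symmetry of $\phi(\B)=\U e^{\U^{H}\B}$ proved by direct manipulation using $\U^{*}=\U^{H}$ and $\B=\B^{T}$, plus local surjectivity onto $\CU_s$ deduced from invertibility of the matrix exponential at ${\bf 0}$), you obtain the manifold structure from the homogeneous-space identification $\CU_s\cong\CU/O(n)$ (orbit of $\I_n$ under $\Q\cdot\U=\Q\U\Q^{T}$, stabilizer $O(n)$, dimension $n^{2}-n(n-1)/2=n(n+1)/2$), and the geodesic claim from the totally-geodesic criterion: the ambient geodesic $\U e^{\U^{H}\B\mu}=\Q e^{j\mu\R}\Q^{T}$ is manifestly unitary and symmetric, so it remains in $\CU_s$, and since zero ambient acceleration projects to zero intrinsic acceleration, uniqueness of geodesics makes it the intrinsic one. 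Note that your conjugation computation is essentially the derivation the paper postpones to the main text (the expression \eqref{eq:geodesic2}), while its appendix verifies symmetry without ever invoking the Takagi factor. What each approach buys: the paper's route is lighter on prerequisites (only local invertibility of the matrix exponential), whereas yours is more self-contained and rigorous precisely on the points the paper glosses over---the manifold structure, the dimension count, and both inclusions in the tangent-space equality---at the cost of invoking standard facts about compact group orbits and totally geodesic submanifolds.
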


\begin{proof}
    See Appendix.
\end{proof}

The above result can be rewritten as follows. The matrix $j\R$ (purely imaginary skew-Hermitian), admits the following eigendecomposition $j\R = \V_R \Sigmab_R \V_R^T$, where $\V_R$ is a (real) orthogonal matrix and $\Sigmab_R = \diag(j\theta_1,\ldots, j\theta_n)$ is a diagonal matrix with the imaginary eigenvalues. Applying standard results of the exponentiation of diagonalizable matrices, we see that the geodesic starting at $\U = \Q\Q^T$ and pointing in the direction $j\Q\R\Q^T$ can be written as
\begin{eqnarray}
     \U(\mu) =  \U e^{j \Q^* \R \Q^T \mu} &=& \Q\Q^T \Q^* \V_R \Lambdab_R^\mu \V_R^T \Q^T  \nonumber \\
     &\stackrel{(a)}{=}& \Q \V_R \Lambdab_R^\mu \V_R^T \Q^T \nonumber \\
     & \stackrel{(b)}{=}& \Q_R  \Lambdab_R^\mu \Q_R^T \label{eq:geodesic2}
\end{eqnarray}
where in $(a)$ we have used $\Q^T \Q^*  = \I_n$, in $(b)$ the unitary matrix $\Q \V_R$ is denoted as $\Q_R$, and $\Lambdab_R^\mu$ is the following diagonal matrix   
\begin{equation*}
\Lambdab_R^\mu = \exp(\Sigmab_R)^\mu = \diag \left(e^{j\theta_1 \mu},\ldots, e^{j\theta_n \mu} \right),
\label{eq:RISmatrix}
\end{equation*}
which is essentially a diagonal RIS. The formulation of the geodesic in \eqref{eq:geodesic2} is the main result of this work, as it establishes an interesting connection between the optimization of a unitary and symmetric BD-RIS and that of a diagonal RIS. In words, geodesics in $\CU_s$ are traversed by multiplying the phases of a diagonal RIS, $\Lambdab_R = \exp(\Sigmab_R)$, by a step size $\mu$. The direction of the geodesic, which will depend on the cost function to be optimized, is determined by the eigenvectors and eigenvalues of $\R$. 

\begin{proposition} [{\bf Retraction to} $\CU_s$ ] \label{prop:retraction} The mapping 
\[
	\begin{matrix}
		\Pi:&\{ \A\in \mathbb{C}^{n\times n}: \A=\A^T\}&\to&\CU_s\\
		&\A&\to& \Q\Q^T,
	\end{matrix}
\]
where $\A=\Q \Sigmab \Q^T$ is a Takagi factorization of $\A$, sends $\A$ to the unitary and symmetric matrix $ \U = \Q\Q^T \in \CU_s$ closest to $\A$ in Frobenius norm (to one of them, if there are several).
\end{proposition}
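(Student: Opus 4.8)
The plan is to recast the closest-point problem as a linear maximization and then bound it directly from the Takagi data of $\A$. Since every $\U\in\CU_s$ is unitary, $\|\U\|_F^2=\tr(\U^H\U)=\tr(\I_n)=n$ is constant over the manifold. Expanding $\|\A-\U\|_F^2=\|\A\|_F^2+n-2\Re\tr(\A^H\U)$, I see that the quadratic term in $\U$ disappears and minimizing the Frobenius distance over $\CU_s$ is equivalent to maximizing the \emph{linear} functional $\Re\tr(\A^H\U)$ over $\U\in\CU_s$.

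Next I would insert the Takagi factorization $\A=\Q\Sigmab\Q^T$ and change variables via $\M:=\Q^H\U\Q^*$. Because $\Sigmab$ is real, $\A^H=\Q^*\Sigmab\Q^H$, and the cyclic property of the trace gives $\tr(\A^H\U)=\tr(\Q^*\Sigmab\Q^H\U)=\tr(\Sigmab\,\Q^H\U\Q^*)=\tr(\Sigmab\M)$. A short verification using $\U\U^H=\I_n$ together with $\Q^*\Q^T=\I_n$ shows that $\M$ is again unitary, and the candidate $\U=\Q\Q^T$ corresponds exactly to $\M=\I_n$, for which $\tr(\Sigmab\M)=\sum_i\sigma_i$.

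The crux is then an elementary estimate. Writing $\Re\tr(\Sigmab\M)=\sum_i\sigma_i\,\Re(M_{ii})$ and using that the columns of the unitary matrix $\M$ have unit Euclidean norm, so $\Re(M_{ii})\le|M_{ii}|\le 1$, the nonnegativity of the $\sigma_i$ yields $\Re\tr(\Sigmab\M)\le\sum_i\sigma_i$, with equality attained at $\M=\I_n$, i.e. at $\U=\Q\Q^T$. This identifies $\Q\Q^T$ as a global minimizer of the distance. I would emphasize that the bound uses only unitarity of $\M$, so enlarging the feasible set from $\CU_s$ to $\CU$ would not change the optimal value; the minimizer simply happens to lie in $\CU_s$.

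I expect the only delicate point to be the parenthetical qualifier ``(to one of them, if there are several)''. Equality in $\Re(M_{ii})\le 1$ forces $M_{ii}=1$ only at indices with $\sigma_i>0$, so when $\A$ has vanishing Takagi values the minimizer is not unique; moreover the Takagi factor $\Q$ itself carries intrinsic non-uniqueness. I would therefore close by noting that $\Pi(\A)=\Q\Q^T$ selects one such minimizer, consistent with the statement.
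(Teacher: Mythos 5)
Your proof is correct, and it differs from the paper's in a meaningful way: the paper's proof is a two-line citation argument, while yours is self-contained. Both share the same crucial structural idea --- relax the problem from $\CU_s$ to the full unitary group $\CU$, identify the global minimizer there, and observe that it happens to be symmetric, so it also solves the constrained problem on $\CU_s\subset\CU$. The paper executes this by invoking the classical Procrustes/polar-factor result (for an SVD $\A=\F\Sigmab\G^H$, the closest unitary matrix is $\F\G^H$) and noting that the Takagi factorization $\A=\Q\Sigmab(\Q^*)^H$ is an SVD with $\F=\Q$, $\G=\Q^*$, giving $\F\G^H=\Q\Q^T$. You instead reprove that classical fact from scratch: the expansion $\|\A-\U\|_F^2=\|\A\|_F^2+n-2\Re\tr(\A^H\U)$, the unitary change of variables $\M=\Q^H\U\Q^*$, and the elementary bound $\Re\tr(\Sigmab\M)=\sum_i\sigma_i\Re(M_{ii})\le\sum_i\sigma_i$ with equality at $\M=\I_n$. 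What your route buys is twofold: the argument needs no external reference, and the equality analysis makes the parenthetical qualifier in the statement precise --- equality forces $M_{ii}=1$ only at indices with $\sigma_i>0$, so the minimizer is unique exactly when $\A$ is nonsingular, and otherwise $\Pi(\A)$ selects one of several minimizers. What the paper's route buys is brevity and reuse of a standard, well-known result. The only cosmetic omission on your side is that you never explicitly verify $\Q\Q^T\in\CU_s$ (symmetry is immediate, and unitarity follows from $\Q\Q^T\Q^*\Q^H=\I_n$); it is worth one sentence, since the whole argument hinges on the unconstrained minimizer lying in the smaller set.
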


\begin{proof}
    Given $\A = \F  \Sigmab  \G^H$ it is known that $\Pi(\A) = \F\G^H$ is the unitary matrix closest to $\A$. If $\A=\Q \Sigmab \Q^T$ is symmetric, $\Pi(\A) = \Q\Q^T$ is hence the unitary and symmetric matrix closest to $\A$ in $\CU$ and consequently in $\CU_s \subset \CU$.
\end{proof}

\begin{proposition} [{\bf Projection to the tangent space}] \label{prop:projection} Given any matrix $\J \in \mathbb{C}^{n\times n}$, the orthogonal projection of $\J$ onto $T_{\U}\CU_s$ is 
\begin{equation}
		\pi_{T_{\U}\CU_s} \J = j\,\Q \underbrace{\mathrm{Imag}\left(\frac{\Q^{H}(\J+\J^T)\Q^{*}}{2}\right)}_{\R}\Q^T,
        \label{eq:proj}
\end{equation}	
	with $\U = \Q\Q^T$ a Takagi decomposition of $\U \in \CU_s$.
\end{proposition}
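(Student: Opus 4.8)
The plan is to obtain the projection by transporting the tangent space, via an isometry, to a standard subspace of $\mathbb{C}^{n\times n}$ where projection is immediate. Throughout I equip $\mathbb{C}^{n\times n}$ with the real inner product $\langle\A,\B\rangle=\Re\tr(\A^H\B)$, which is the ambient metric inducing the Riemannian structure on $\CU_s\subset\CU\subset\mathbb{C}^{n\times n}$; accordingly $\pi_{T_{\U}\CU_s}(\J)$ is characterized as the unique element of $T_{\U}\CU_s$ such that $\J-\pi_{T_{\U}\CU_s}(\J)$ is $\langle\cdot,\cdot\rangle$-orthogonal to $T_{\U}\CU_s$. The key device is the real-linear map $\Phi(\B)=\Q^H\B\Q^*$, with $\U=\Q\Q^T$ the fixed Takagi decomposition.

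First I would record the unitarity identities $\Q^H\Q=\I_n$ and $\Q^T\Q^*=\Q^*\Q^T=\I_n$ (the latter since a one-sided inverse of a square matrix is two-sided), and use them to verify that $\Phi$ is a real-linear bijection with inverse $\Phi^{-1}(\C)=\Q\C\Q^T$, and that it is an isometry: a cyclic-trace computation gives $\langle\Phi(\A),\Phi(\B)\rangle=\Re\tr(\Q^T\A^H\B\Q^*)=\Re\tr(\A^H\B)=\langle\A,\B\rangle$. Next, invoking the second description of the tangent space in Proposition~\ref{prop:tangent}, namely $T_{\U}\CU_s=\{j\Q\R\Q^T:\R=\R^T\in\mathbb{R}^{n\times n}\}$, I would show that $\Phi$ maps this space bijectively onto $\mathcal{S}=\{j\R:\R=\R^T\in\mathbb{R}^{n\times n}\}$, the subspace of purely imaginary symmetric matrices, because $\Phi(j\Q\R\Q^T)=j\,\Q^H\Q\,\R\,\Q^T\Q^*=j\R$.

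Since an isometry intertwines orthogonal projections, $\pi_{T_{\U}\CU_s}=\Phi^{-1}\circ\pi_{\mathcal{S}}\circ\Phi$, so the task reduces to projecting $\tilde{\J}:=\Phi(\J)=\Q^H\J\Q^*$ onto $\mathcal{S}$. Here I would use the orthogonal splitting of $\mathbb{C}^{n\times n}$ into the four mutually $\langle\cdot,\cdot\rangle$-orthogonal real subspaces of real-symmetric, real-skew, imaginary-symmetric, and imaginary-skew matrices; the $\mathcal{S}$-component of $\tilde{\J}$ is exactly $\pi_{\mathcal{S}}(\tilde{\J})=j\,\mathrm{Imag}\big(\tfrac{1}{2}(\tilde{\J}+\tilde{\J}^T)\big)$, i.e. the imaginary part of the symmetric part. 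Applying $\Phi^{-1}$ and simplifying $\tilde{\J}+\tilde{\J}^T=\Q^H\J\Q^*+\Q^H\J^T\Q^*=\Q^H(\J+\J^T)\Q^*$ (using $(\Q^H)^T=\Q^*$ and $(\Q^*)^T=\Q^H$) then yields $\pi_{T_{\U}\CU_s}\J=j\,\Q\,\mathrm{Imag}\big(\tfrac{1}{2}\Q^H(\J+\J^T)\Q^*\big)\Q^T$, which is \eqref{eq:proj}.

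I expect the main obstacle to be bookkeeping rather than conceptual: keeping the transpose/conjugate interplay correct so that $\Phi$ is genuinely an isometry and genuinely carries $T_{\U}\CU_s$ onto the imaginary-symmetric matrices, and justifying the four-fold orthogonal decomposition. The crucial instance there is that for real matrices $\R_1,\R_2$ one has $\langle\R_1,j\R_2\rangle=\Re\big(j\tr(\R_1^T\R_2)\big)=0$ since $\tr(\R_1^T\R_2)$ is real, so the real and imaginary layers are mutually orthogonal (the symmetric/skew splitting within each layer being standard). As a sanity check against sign errors I would also verify directly that $\J-\pi_{T_{\U}\CU_s}\J$ is orthogonal to every $j\Q\R\Q^T$, which must recover the same formula.
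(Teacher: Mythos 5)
Your proof is correct and is in essence the paper's own argument written out in full: the paper parameterizes the tangent space as $j\Q\R\Q^T$ via Proposition~\ref{prop:tangent} and states that solving $\min_{\R=\R^T\in\mathbb{R}^{n\times n}}\|\J-j\Q\R\Q^T\|_F^2$ yields \eqref{eq:proj}, and your isometry $\Phi(\B)=\Q^H\B\Q^*$ together with the projection onto imaginary-symmetric matrices is precisely the unitary-invariance change of variables by which that least-squares problem is solved. The only difference is presentational: the paper leaves the reduction and the four-way orthogonal splitting implicit, while you spell them out (and your final sanity check of orthogonality of the residual is exactly the optimality condition the paper's minimization invokes).
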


\begin{proof} According to Proposition \ref{prop:tangent} the tangent space at $\U = \Q\Q^T$ can be parameterized as $\B = j\Q\R\Q^T$ with $\R$ real and symmetric. Solving the problem 
\[\min_{\R \in \mathbb{R}^{n \times n}, \R = \R^T} \| {\bf J}- j\Q\R\Q^T \|_F^2
\] 
we get \eqref{eq:proj}.
\end{proof}

\subsection{Manifold optimization algorithm}
Suppose we want to maximize a continuous and differentiable function, $f(\U)$, with $\U \in \CU_s$. To develop an MO algorithm on $\CU_s$, it remains to be discussed how to select the step size, $\mu$, along the geodesic in \eqref{eq:geodesic2}. The standard approach to find a suitable $\mu$ is to apply a line search procedure that ensures $f(\U_{k+1})>f(\U_k)$, as described in \cite[Ch. 4]{AbsilBook}. Nevertheless, the parameterization of the geodesic in \eqref{eq:geodesic2} allows us to conceive other methods that increase the search space by optimizing a vector $\mub = (\mu_1,\ldots,\mu_n)$ instead of optimizing a single adaptation step $\mu$ common to all phases $\theta_m$. More specifically, the new phases, denoted as $\theta_m' = \theta_m \mu_m$, $m=1,\ldots, n$, can be optimized one by one alternately, sometimes with closed-form solutions for certain cost functions. The proposed algorithm with this alternating optimization procedure for the selection of the new phases is summarized in Algorithm  \ref{alg:MOopt}\footnote{Matlab code can be downloaded from \url{https://github.com/IgnacioSantamaria/Code-MaxCap-BD-RIS-MIMO}.}. 

\begin{proposition} [{\bf Convergence}] \label{prop:convergence} Suppose that i) $f(\U)$ is a bounded from above, continuous, and differentiable function that we wish to maximize; and ii) the vector of step sizes $\mub = (\mu_1,\ldots,\mu_n)$ is selected to satisfy $f(\U_{k}) > f(\U_{k-1})$. Then, under mild conditions (see \cite[Ch. 4]{AbsilBook}), the sequence of iterates $\U_0,\U_1, \ldots$ provided by Algorithm \ref{alg:MOopt} converges to a stationary point of $f(\U)$.  
\end{proposition}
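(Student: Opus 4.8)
The plan is to recognize this as a particular instance of the general convergence theory for line-search methods on Riemannian manifolds and to reduce it to the accumulation-point theorem of Absil et al.\ \cite[Thm.~4.3.1, Ch.~4]{AbsilBook}, verifying its hypotheses one by one. First I would record the structural facts that make that theorem applicable. The manifold $\CU_s$ is compact, being a closed and bounded subset of the unitary group $\CU$, which is itself compact; hence the whole sequence of iterates $\U_0,\U_1,\ldots$ lies in a compact set and therefore admits at least one accumulation point. The map along which Algorithm \ref{alg:MOopt} moves, $\U\mapsto\U e^{\U^H\B\mu}$ of Proposition \ref{prop:tangent}, is a genuine retraction (the restriction of the exponential of $\CU$ to $T_{\U}\CU_s$), and the search direction at $\U_k$ is built from the projected Riemannian gradient $\pi_{T_{\U}\CU_s}\operatorname{grad} f$ supplied by Proposition \ref{prop:projection}.

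Next I would check the two analytic hypotheses. Since $f$ is continuous on the compact manifold and bounded from above, and the algorithm enforces $f(\U_k)>f(\U_{k-1})$, the sequence $\{f(\U_k)\}$ is monotone increasing and bounded, hence convergent; consequently the successive increments $f(\U_k)-f(\U_{k-1})$ tend to zero. The search directions are gradient-related in the sense required by the abstract theorem: the ascent direction is obtained from $\pi_{T_{\U}\CU_s}\operatorname{grad} f$ of \eqref{eq:proj}, so it vanishes exactly when the Riemannian gradient vanishes and otherwise makes a strictly positive inner product with the gradient.

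The delicate point, and precisely the content of the ``mild conditions'' mentioned in the statement, is the step-size rule. Condition ii) alone (mere strict increase) does not by itself force convergence to a stationary point, since the steps might shrink too quickly and leave the gradient bounded away from zero. What must be guaranteed in addition is a sufficient-increase (Armijo-type) property. Here I would argue that the componentwise optimization of the phase vector $\mub=(\mu_1,\ldots,\mu_n)$ dominates the baseline of a single common step $\mu$ selected by standard Armijo backtracking along the same geodesic: optimizing each $\mu_m$ (or performing an exact line search) can only increase $f$ by at least as much as the common-step Armijo point, so the Armijo inequality is inherited by the iterates of Algorithm \ref{alg:MOopt}. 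This reduces the algorithm to an admissible Riemannian line-search scheme and allows \cite[Thm.~4.3.1]{AbsilBook} to conclude that every accumulation point of $\{\U_k\}$ is a critical point of $f$; concretely, if $\U_\infty$ is such a point, the vanishing increments together with the sufficient-increase bound force $\pi_{T_{\U_\infty}\CU_s}\operatorname{grad} f=0$, i.e.\ $\U_\infty$ is stationary on $\CU_s$.

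I expect the main obstacle to be exactly this last step-size argument: one must make rigorous the claim that the alternating per-phase optimization inherits the Armijo guarantee rather than merely producing monotone increase, because it is this distinction that separates genuine convergence to a stationary point from possible stagnation at a non-stationary limit. All remaining ingredients (compactness, monotone boundedness of $f$, gradient-relatedness of the projected gradient, and the retraction property of the exponential) are either immediate or already established in Propositions \ref{prop:tangent} and \ref{prop:projection}.
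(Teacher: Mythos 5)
Your overall route coincides with the paper's: the paper's entire proof is a one-line reduction to the general convergence theory of monotone line-search Riemannian methods in \cite[Ch.~4]{AbsilBook}, and you invoke the same theorem. You actually go well beyond the paper by checking its hypotheses explicitly (compactness of $\CU_s$ inside the compact group $\CU$, the retraction property from Proposition \ref{prop:tangent}, gradient-relatedness of the projected gradient from Proposition \ref{prop:projection}, and monotone boundedness of $\{f(\U_k)\}$), all of which the paper silently absorbs into the phrase ``under mild conditions.'' Your diagnosis that mere strict increase (condition ii) is not enough, and that the real content of those mild conditions is an Armijo-type sufficient-increase property, is exactly right and is the most valuable part of your write-up.

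The one step you add on top of the citation, however, does not hold. You claim the per-phase alternating maximization inherits the Armijo guarantee because ``optimizing each $\mu_m$ can only increase $f$ by at least as much as the common-step Armijo point.'' That dominance is false in general: the sweep in Algorithm \ref{alg:MOopt} maximizes over one phase at a time, so it only compares points reachable by axis-aligned moves from the current intermediate point, while the common-step Armijo point lies on the diagonal curve $\mu \mapsto (\theta_1\mu,\ldots,\theta_n\mu)$; neither candidate set contains the other, so no inequality between the two outcomes follows. Concretely, for an ill-conditioned objective (think of maximizing $-(x-y)^2 - \varepsilon\,(x+y-c)^2$ over two phases starting near the origin, where the gradient points along $(1,1)$), one full coordinate sweep gains only $O(\varepsilon)$ while a backtracking search along the joint direction gains $O(1)$; the sweep output can therefore be strictly worse than the Armijo point, and the sufficient-increase inequality is not inherited. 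Two honest repairs: (i) do what the paper implicitly does and place the Armijo condition among the assumed ``mild conditions,'' so \cite[Ch.~4]{AbsilBook} applies by hypothesis; or (ii) modify the algorithm so your claim becomes true by construction, e.g., also evaluate the common-step Armijo candidate $\mu_1=\cdots=\mu_n=\mu_{\mathrm{Armijo}}$ and keep whichever of the two points gives the larger $f$. With either fix the cited theorem yields that every accumulation point of $\{\U_k\}$ is a critical point --- which, as you correctly note, is the precise conclusion, and is slightly weaker than the statement's blanket ``converges to a stationary point.''
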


\begin{proof} 
The proof of the global convergence of monotone line search Riemannian algorithms on manifolds can be found in \cite[Ch. 4]{AbsilBook}.
\end{proof}

\begin{algorithm}[!t]
\small
\DontPrintSemicolon
\SetAlgoVlined
\KwIn{Initial $\U_0 = \Q_0\Q_0^T \in \CU_s$, Cost function: $F_0 = f(\U_0)$, convergence threshold: $\epsilon$}
\KwOut{Final $\U = \Q\Q^T \in \CU_s$}
\For{$k = 1,\ldots$}{
Calculate gradient $\J =\nabla f(\U)$ at $\U_{k-1} = \Q_{k-1}\Q_{k-1}^T$ \;
Find $\R$ as in \eqref{eq:proj}\;
Perform eigendecomposition $j\R = \V_R \Sigmab_R \V_R^T$\;
$\Lambdab_R = \exp(\Sigmab_R) = \diag \left(e^{j\theta_1},\ldots, e^{j\theta_n} \right)$\;
$\Q_{R} = \Q_{k-1} \V_R$ \;
\For{$m=1,\ldots,n$}{
$\theta_m' = {\rm arg}\max_{\theta_m} f(\Q_{R} \Lambdab_R(\theta_m) \Q_{R}^T)$}

Obtain the new Takagi factor as $\Q_{k} = \Q_{k-1} \V_R \Lambdab_R^{1/2}$\;
Obtain $\U_{k} = \Q_{k}\Q_{k}^T$ and $F_k = f(\U_{k})$\;
Check convergence: $|F_k-F_{k-1}| < \epsilon$ 
}
\caption{Manifold optimization algorithm in $\CU_s$}
\label{alg:MOopt}
\end{algorithm}

\section{Application}
\label{sec:rateopt}
\subsection{Rate optimization of BD-RIS-assisted MIMO links}
\label{sec:problem}
As an application of Algorithm \ref{alg:MOopt}, we consider the design of a BD-RIS in a MIMO channel to maximize the achievable rate. The MIMO system, with $N_t$ transmit antennas and $N_r$ receive antennas, is assisted by a fully connected BD-RIS with $M$ elements. The equivalent MIMO channel is
\begin{equation}
\H_{eq} (\Thetab) = \H_d + \F \Thetab \G^H,
\label{eq:MIMOchanneleq}
\end{equation} 
where $\G \in \mathbb{C}^{N_t \times M}$ is the channel from the Tx to the BD-RIS, $\F \in \mathbb{C}^{N_r \times M}$ is the channel from the BD-RIS to the Rx, $\H_d \in \mathbb{C}^{N_r \times N_t}$ is the MIMO direct link, and $\bm{\Theta} \in \CU_s$ is the $M \times M $ BD-RIS matrix.

The Tx sends proper Gaussian signals with a fixed isotropic covariance matrix, $\x \sim \mathcal{CN}({\bf 0}, P \I)$, and the received signal is contaminated by additive white Gaussian noise, ${\bf n} \sim \mathcal{CN}({\bf 0}, \sigma^2 \I)$. Then, the BD-RIS that maximizes the achievable rate solves the problem
\begin{equation*}
\label{eq:MaxCapMIMO}
\,\max_{\Thetab \in \CU_s}\,\, f(\Thetab) = \max_{\Thetab \in \CU_s}\log \det \left( \I_{N_R} + \frac{P}{\sigma^2}\H_{eq}(\Thetab)\H_{eq}(\Thetab)^H \right), 
\end{equation*}
where $\H_{eq}(\Thetab)$ is given by \eqref{eq:MIMOchanneleq}. Let us define for simplicity of notation $\E(\Thetab) = \I_{N_R} + \frac{P}{\sigma^2}\H_{eq}(\Thetab)\H_{eq}(\Thetab)^H$. The unconstrained gradient of the cost function is \cite{PalomarTIT06}
\begin{equation}
    \J = \nabla f(\Thetab) = \frac{P}{\sigma^2} \F^H \E(\Thetab)^{-1} \H_{eq}(\Thetab) \G. 
    \label{eq:unconstrained}
\end{equation}

The projection of $\nabla f(\Thetab)$ onto the tangent space gives us the symmetric and real matrix $\R$ in \eqref{eq:proj}. From the eigendecomposition of $\R$, the new BD-RIS can be parameterized as $\Thetab = \Q_R \diag \left(e^{j\theta_1'},\ldots, e^{j\theta_M'} \right) \Q_R^T$. We can now substitute this BD-RIS parameterization into the rate expression and apply an alternating optimization method to independently optimize each of the phases $\theta_m'$. To solve this problem, we can use the method proposed in \cite[Sec. III.B]{ZhangCapacityJSAC2020}.

\subsection{Computational cost}
\label{sec:compcost}
To simplify the notation, we consider the maximization of the rate of an $N \times N$ MIMO system. For this particular case, the computation of the gradient in \eqref{eq:unconstrained}  has a complexity $\mathcal{O}(MN^2 + M^2N +N^3)$. The projection on the tangent space involves the multiplication of $M \times M$ matrices and is $\mathcal{O}(2M^3)$, and the eigendecomposition of $\R$ is $\mathcal{O}(M^3)$. The update of the Takagi factor to $\Q_k = \Q_{k-1} \V_R \Lambdab_R^{1/2}$ is also $\mathcal{O}(M^3)$. Finally, the computational cost per iteration grows as $\mathcal{O}(4M^3 + MN^2 + M^2N +N^3)$, which is dominated by the first term.

\section{Simulation results}
\label{sec:simulations}
In this section, numerical results are provided to validate the effectiveness of the proposed MO algorithm. We consider a $4 \times 4$ MIMO system with the Tx located at coordinates (0,0,1.5) [m] and the Rx at coordinates (50,0,1.5) [m]. A fully-connected BD-RIS with $M$ elements is located close to the Rx at coordinates (50,3,3) [m] to assist the Tx-Rx communication. The channels through the BD-RIS have a dominant line-of-sight path and are therefore modeled as Rician with factor $K=3$ and path loss exponent $\alpha = 2$. The direct channel is modeled as a Rayleigh channel with path loss exponent that we take as $\alpha = 3.75$ or $\alpha = 8$ to model scenarios where the direct link is obstructed. The rest of the scenario parameters are the same as in \cite{SantamariaSPAWC24}. The results shown are the average of 200 independent simulations (channel realizations) keeping the Tx, Rx and BD-RIS positions fixed. For the proposed algorithm, we used a convergence threshold of $\epsilon = 10^{-3}$. We compare the performance of the proposed algorithm (labeled as {\bf MO in $\CU_s$}) with the following methods:
\begin{itemize}
     \item {\bf MM+MO in $\CU$} (\cite{SantamariaSPAWC24}): This method applies a minorize-maximization (MM) algorithm and exploits the Takagi factorization of the BD-RIS. 
    \item {\bf MO in $\CU$ + Proj}: it applies an MO algorithm on $\CU$ to find a unitary but non-symmetric BD-RIS, $\Thetab_u$, that maximizes the rate. The final BD-RIS is obtained as the retraction (see Proposition \ref{prop:retraction}) $\Pi(\Thetab_u + \Thetab_u^T)$.
    \item {\bf Low Cost} (\cite{MaoCL2024}): The BD-RIS is obtained as $\Pi(\F^H\H_d\G + (\F^H\H_d\G)^T)$. It assumes that the direct link is not blocked.
\end{itemize}

\begin{figure}[htb]
    \centering
\includegraphics[width=.5\textwidth]{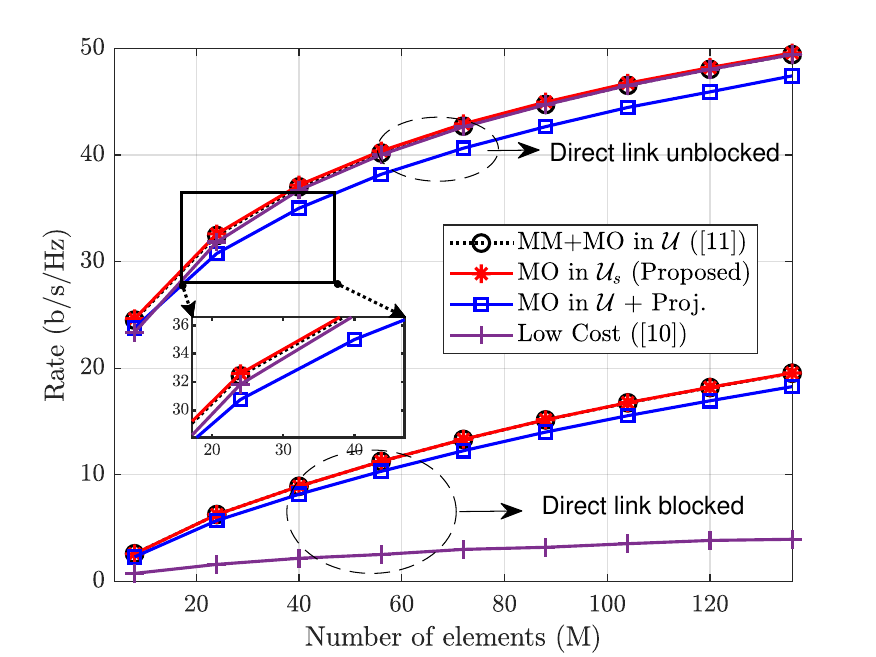}
     \caption{Rate vs. number of BD-RIS elements achieved with different algorithms.}
	\label{fig:RatevsM}
\end{figure}

Fig. \ref{fig:RatevsM} shows the rate versus the number of BD-RIS elements, $M$, when the direct channel is blocked (lower curves) and when there is a direct channel (upper curves). The proposed MO algorithm provides practically identical results to \cite{SantamariaSPAWC24} in both cases, although, as we will see later, with much less computational effort. The results are also very similar to those provided by \cite{MaoCL2024} when the direct channel is unblocked, but when the direct channel is blocked this method is not applicable. Finally, the final retraction step after optimizing on $\CU$ causes a significant rate penalty.

Fig. \ref{fig:ConvergenceSameChannel4x4M64} a) illustrates the convergence of the proposed algorithm for a BD-RIS with $M=64$ elements starting from different random points. The largest rate increase occurs in the first iteration, and convergence is ensured in all realizations within less than 10 iterations. Fig. \ref{fig:ConvergenceSameChannel4x4M64} b) shows the computational complexity of the different algorithms under comparison measured through the logarithmic run time in secs. The proposed algorithm is orders of magnitude faster than the one described in \cite{SantamariaSPAWC24} and is even faster than the method {\bf MO in $\CU$ + Proj}. The closed-form solution of \cite{MaoCL2024} is the most computationally efficient, but it requires an unblocked direct channel.
\begin{figure}[ht]
    \centering
\includegraphics[width=.5\textwidth]{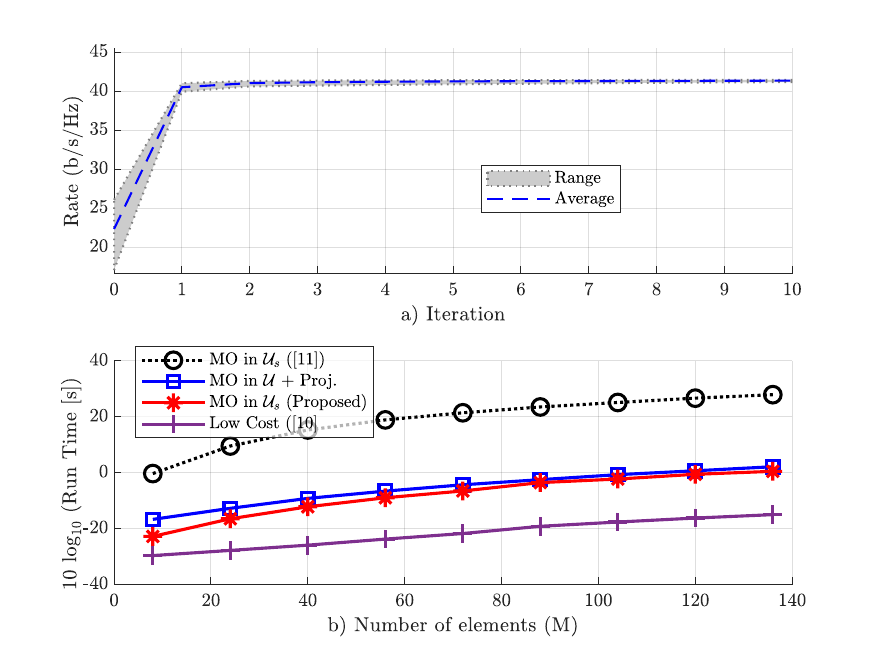}
     \caption{a) Convergence curves of the proposed algorithm starting from different random initializations, b) Run time (in log scale) vs. number of BD-RIS elements.}
	\label{fig:ConvergenceSameChannel4x4M64}
\end{figure}
\section{Conclusions}
The characterization of the tangent space and the geodesics of the manifold of unitary and symmetric matrices has allowed us to propose a new Riemannian optimization algorithm that does not require any adaptation parameter (e.g., step size). The application of the proposed MO algorithm to optimizing the rate of a BD-RIS-assisted MIMO system shows a significant reduction in computational cost with identical performance compared to previous alternatives based on Takagi decomposition. In future work, we will consider the application of the proposed MO algorithm to other problems, such as the minimization of the mean square error (MSE) \cite{LongTWC24} in BD-RIS assisted systems.

\section{Appendix: Proof of Proposition 1}
\label{AppendixA}
{\bf tangent space}. It is a simple exercise to verify that the following sets are equivalent vector spaces of dimension $n(n+1)/2$  
\begin{align*}
	T_U\CU_s =& \{\B\in  \mathbb{C}^{n\times n}: \U^H\B+\B^H\U={\bf 0},\;\B=\B^T\}\\
    =&\{\U \A: \A\in\mathbb{C}^{n\times n}, \A+\A^H= {\bf 0},\; \U\A\U^H=\A^T\} \\ 
    =& \{j\Q\R\Q^T:\R\in \mathbb{R}^{n\times n}, \R=\R^T \},	
\end{align*}
where $\U = \Q\Q^T \in \CU_s$ is a Takagi decomposition.\\

\noindent{\bf Geodesic}. The Riemannian exponential in $\CU$, $\phi(\B)= \U e^{\U^H\B}$, is a diffeomorphism (a bijection with differentiable inverse) when restricted to matrices $\|\B\|_F^2 < \epsilon$ \cite{HallLieGroups}. Therefore, its restriction to the set $\CU_s$ (i.e., when $\B \in  T_U\CU_s$) is a diffeomorphism with its image. It remains to prove that the set 
\[
\mathcal{X} = \{\phi(\B) : \B \in  T_U\CU_s, \|\B\|_F^2 < \epsilon \}
\]
contains an open set of $\B$ in $\CU_s$. We prove the double inclusion:
\begin{itemize}
    \item $\mathcal{X} \in \CU_s$: We have that $\phi(\B) \in \CU$ because it is a restriction of the exponential map in $\CU$. We have to prove that $\phi(\B)$ is symmetric.
\begin{eqnarray*}
    \phi(\B)^T = \left( \U e^{\U^H\B}\right)^T = \left( \U e^{\U^H\B \U^H \U}\right)^T \\
    = \left( \U \U^H e^{\B\U^H} \U\right)^T = \U e^{(\B\U^H)^T} = \phi(\B).
\end{eqnarray*}

\item The fact that for every $\U' \in \CU_s$ in a neighborhood of $\U$ there exists a $\B \in \mathcal{X}$ such that $\phi(\B) = \U'$ follows from the fact that the matrix exponential is locally invertible at ${\bf 0} $ since $\frac{d(e^{t\C})}{dt} |_{t=0} = \C$.
    
\end{itemize}

\bibliographystyle{IEEEbib}
\bibliography{refs}

@inproceedings{SantamariaSPAWC24,
  title={{MIMO} Capacity Maximization with Beyond-Diagonal {RIS}},
  author={Santamaria, I. and Soleymani, M. and Jorswieck, E. and Guti\'errez, J.},
  booktitle={IEEE International Workshop on Signal Processing Advances in Wireless Communications  (SPAWC)},
address = {Lucca, Italy},
pages={936--940},
 doi       = {10.1109/SPAWC60668.2024.10694491},
  year={2024}
}

@article{Autonne,
  title={Sur les matrices hypohermitiennes et sur les matrices unitaires},
  author={L. Autonne},
  journal={Ann. Univ. Lyon},
  volume={38},
  pages={1--77},
  year={1915}
}

@article{Takagi,
  title={On an algebraic problem related to an analytic theorem of {C}arathéodory and {F}ejér and on an allied theorem of {L}andau},
  author={T. Takagi},
  journal={Jpn. J. Math. Trans. Abstr.},
  volume={1},
  pages={83-93},
  year={1924}
}

@Book{boumal2023intromanifolds,
  title     = {An Introduction to Optimization on Smooth Manifolds},
  author    = {Boumal, N.},
  publisher = {Cambridge University Press},
  year      = {2023},
  url       = {https://www.nicolasboumal.net/book},
  doi       = {10.1017/9781009166164}
}

@article{SantamariaSPLetters2023,
  title={{SNR} Maximization in Beyond Diagonal {RIS}-assisted Single and Multiple Antenna Links},
  author={Santamaria, I. and Soleymani,  M. and  Jorswieck, E. and Guti\'errez, J.},
  journal={IEEE Signal Proc. Letters},
  volume={30},
  pages={923--926},
  year={2023},
  publisher={IEEE}
}

@article{edelman1998geometry,
  title={The geometry of algorithms with orthogonality constraints},
  author={Edelman, A. and Arias, T. A. and Smith, S. T.},
  journal={SIAM Journal on Matrix Analysis and Applications},
  volume={20},
  number={2},
  pages={303--353},
  year={1998},
  publisher={SIAM}
}

@article{MaoCL2024,
  title={A low-complexity beamforming design for beyond-diagonal {RIS} aided multi-user networks},
  author={Fang, T. and Mao, Y.},
  journal={IEEE Comm. Letters},
  volume={28},
  number={1},
  pages={203--207},
  year={2024},
  publisher={IEEE}
}

@inproceedings{EmilEuCap2025,
  title={Capacity maximization for {MIMO} channels assisted by beyond-diagonal {RIS}},
  author={Bj{\"o}rnson, E. and Demir, T.},
  booktitle={19th European Conference on Antennas and Propagation},
address = {Stockholm, Sweden},
  year={2025}
}

@article{ZhangCapacityJSAC2020,
  title={Capacity characterization for intelligent reflecting
surface aided {MIMO} communication},
  author={Zhang, S. and Zhang, R. },
  journal={IEEE J. Sel. Areas Commun.},
  volume={38},
  number={8},
  pages={1823--1838},
  year={2020},
  publisher={IEEE}
}

@article{PalomarTIT06,
  title={Gradient of mutual information in linear vector {G}aussian channels},
  author={Palomar, D. P. and Verdú, S. },
  journal={IEEE Trans. Inf. Theory},
  volume={52},
  number={1},
  pages={141--154},
  year={2006},
  publisher={IEEE}
}

@article{ClerckxTWC22a,
  title={Modeling and architecture design of reconfigurable intelligent surfaces using scattering parameter network analysis},
  author={Shen, S. and Clerckx, B. and Murch, R.},
  journal={IEEE Trans. Wireless Commun.},
  volume={21},
  pages={1229--1243},
  year={2022},
  publisher={IEEE}
}

@article{ClerckxTWC22b,
  title={Beyond diagonal reconfigurable intelligent surfaces: From Transmitting and reflecting modes to single-, group-, and fully-connected architectures},
  author={Li, H. and Shen, S. and Clerckx, B.},
  journal={IEEE Trans. Wireless Commun.},
  volume={22},
  number={4},
  pages={2311--2324},
  year={2022},
  publisher={IEEE}
}

@Book{HallLieGroups,
  title     = {Lie Groups, Lie Algebras, and Representations: An Elementary Introduction},
  author    = {Hall, B. C.},
  publisher = {Springer},
  year      = {2015}
}

@article{Xia25,
  title={Statistical {CSI}-Enabled Optimization for Beyond Diagonal {SIM}},
  author={Xia, Q. and Zhang, J. and Xu, K. and Ma, S. and Jin, S. and Yuen, C.},
  journal={IEEE Wireless Comm. Letters (Early Access).},
 doi= {10.1109/LWC.2025.3584817},
  year={2025},
  publisher={IEEE}
}

@book{AbsilBook,
	author = {P. A. Absil and R. Mahony and R. Sepulchre},
	publisher = {Princeton University Press},
	title = {Optimization Algorithms on Matrix Manifolds},
	year = {2008}}

@article{LongTWC24,
  title={{MMSE} Design of {RIS}-Aided Communications With Spatially-Correlated Channels and Electromagnetic Interference},
  author={Long, W. X. and Moretti, M. and Abrardo, A. and Sanguinetti, L. and Chen, R.},
  journal={IEEE Trans. Wireless Commun.},
  volume={23},
  number={11},
  pages={16992--17006},
  year={2024},
  publisher={IEEE}
}

@article{SunTSP24,
  title={Precoder Design for Massive {MIMO} Downlink With Matrix Manifold Optimization},
  author={Sun, R. and Wang, C. and Lu, A. and Gao, X. and Xia, X.-G.},
  journal={IEEE Trans. Signal Process.},
  volume={72},
  pages={1065--1080},
  year={2024},
  publisher={IEEE}
}

@inproceedings{ProjUNN,
author = {Kiani, B. T. and Balestriero, R. and LeCun, Y. and Lloyd, S.},
title = {projUNN: efficient method for training deep networks with unitary matrices},
year = {2022},
isbn = {9781713871088},
address = {Red Hook, NY, USA},
abstract = {In learning with recurrent or very deep feed-forward networks, employing unitary matrices in each layer can be very effective at maintaining long-range stability. However, restricting network parameters to be unitary typically comes at the cost of expensive parameterizations or increased training runtime. We propose instead an efficient method based on rank-k updates - or their rank-k approximation - that maintains performance at a nearly optimal training runtime. We introduce two variants of this method, named Direct (projUNN-D) and Tangent (projUNN-T) projected Unitary Neural Networks, that can parameterize full N-dimensional unitary or orthogonal matrices with a training runtime scaling as O(kN2). Our method either projects low-rank gradients onto the closest unitary matrix (projUNN-T) or transports unitary matrices in the direction of the low-rank gradient (projUNN-D). Even in the fastest setting (k = 1), projUNN is able to train a model's unitary parameters to reach comparable performances against baseline implementations. In recurrent neural network settings, projUNN closely matches or exceeds bench- marked results from prior unitary neural networks. Finally, we preliminarily explore projUNN in training orthogonal convolutional neural networks, which are currently unable to outperform state of the art models but can potentially enhance stability and robustness at large depth.},
booktitle = {Proceedings of the 36th International Conference on Neural Information Processing Systems},
articleno = {1050},
numpages = {16},
location = {New Orleans, LA, USA},
series = {NIPS '22}
}

@article{AbrudanTSP08,
  title={Steepest Descent Algorithms for Optimization Under Unitary Matrix Constraint},
  author={Abrudan, T. E. and Eriksson, J. and Koivunen, V.},
  journal={IEEE Trans. Signal Process.},
  volume={56}, 
  number={3},
  pages={1134--1147},
  year={2008},
  publisher={IEEE}
}

\end{document}